\newtheorem{proposition}{Proposition}
\newtheorem{assumption}{Assumption}
\begin{document}
\bibliographystyle{IEEE2}
\title{Competition and Cooperation Analysis for Data Sponsored Market: A Network Effects Model}
\author{Zehui Xiong$^1$, Shaohan Feng$^1$, Dusit Niyato$^1$, Ping Wang$^1$ and Yang Zhang$^2$\\
$^1$School of Computer Science and Engineering, Nanyang Technological University (NTU), Singapore\\
$^2$School of Computer Science and Technology, Wuhan University of Technology, China}
\maketitle
\begin{abstract}
The data sponsored scheme allows the content provider to cover parts of the cellular data costs for mobile users. Thus the content service becomes appealing to more users and potentially generates more profit gain to the content provider. In this paper, we consider a sponsored data market with a monopoly network service provider, a single content provider, and multiple users. In particular, we model the interactions of three entities as a two-stage Stackelberg game, where the service provider and content provider act as the leaders determining the pricing and sponsoring strategies, respectively, in the first stage, and the users act as the followers deciding on their data demand in the second stage. We investigate the mutual interaction of the service provider and content provider in two cases: (i) competitive case, where the content provider and service provider optimize their strategies separately and competitively, each aiming at maximizing the profit and revenue, respectively; and (ii) cooperative case, where the two providers jointly optimize their strategies, with the purpose of maximizing their aggregate profits. We analyze the sub-game perfect equilibrium in both cases. Via extensive simulations, we demonstrate that the network effects significantly improve the payoff of three entities in this market, i.e., utilities of users, the profit of content provider and the revenue of service provider. In addition, it is revealed that the cooperation between the two providers is the best choice for all three entities.
\end{abstract}
\begin{IEEEkeywords}
Data sponsoring, competition and cooperation, network effects, socially-aware service.
\end{IEEEkeywords}

\section{Introduction}\label{Sec:Introduction}
Currently, the demand of cellular data usage continues to rise sharply, and the high data cost becomes one of the critical concerns for Mobile Users (MUs) while consuming the cellular data. Therefore, one of the important challenges for Content Provider (CP) is how to attract more MUs to access its contents and thus achieve a higher revenue gain. In 2014, AT\&T launched a data sponsored scheme~\cite{AT&T}, where the CP (e.g., Youbube, Twitter) can sponsor their MUs' cellular data cost and thereby the MUs access the CP's contents through network Service Provider (SP) with lower charge. Clearly, the data sponsoring potentially creates a triple-win outcome for MUs, CP and SP. Specifically, MUs benefit from consuming cellular data with lower price, which increases the data demand for accessing contents, and in turn the higher demand of MUs contributes to the revenue gain of CP and SP.

With the remarkable interests from academia and industry, the data sponsoring has attracted many researchers to investigate and innovate better schemes.  For example, the authors in~\cite{andrews2014calculating} addressed the issue faced by both CP and SP when the amount of content traffic is uncertain, and derived the pricing by examining traffic data of CP. The sponsoring competition among multiple CPs was studied in~\cite{ma2014subsidization}, and it was also demonstrated that the competition improves the payoff for both SP and CPs. The interaction among the monopoly SP, a single CP and MUs was modeled as a Stackelberg game in~\cite{joe2015sponsoring}, where the MUs are assumed to be homogeneous. Then, the authors in~\cite{zhang2015sponsored} studied the similar problem to that in~\cite{zhang2014sponsoring}, where non-sponsored and sponsored CP coexist. In~\cite{zhang2016tds}, the authors explored the interplay between SP and CPs, and presented a pricing mechanism for sponsored data that is truthful in CP's valuation as well as its underlying traffic. The authors in~\cite{wwb2017sponsor} studied the service-selection process among the MUs as an evolutionary population game and demonstrated that how sponsoring helps to improve the SP's profit and the MU's experience.

However, all of the above works studied the data sponsoring without considering the complex interactions among MUs. The data usage demanded by MUs belongs to the information goods~\cite{brake2016}, and network effect\footnote{Network effect implies that a product or service is more valuable to users as the number of users increases.} is an important phenomenon of information economies~\cite{gong2015network}. The underlying network effects amid the data sponsored market influence the user behaviors, which further complicate the interplay between CP and SP. In this paper, we focus on the competition and cooperation between CP and SP, while taking the network effects among MUs into consideration. The contributions of this paper are as follows:
\begin{itemize}
 \item We model the interplay among the SP, CP and MUs as two-stage game, and we study the pricing, sponsoring strategies as well as data demand, respectively, in data sponsored market through backward induction.
 \item We exploit the local network effects utilizing the structural properties of the underlying social network, which improve the data demand in a large extent.
 \item We consider and analyze the interaction between the SP and CP under non-cooperative and cooperative scenarios, respectively.
  \item Our evaluation results reveal the fact that the cooperation helps achieve the triple-win outcome for three entities, i.e., the MUs, CP and SP.
\end{itemize}

The rest of the paper is structured as follows. We present the system model and formulate a two-stage Stackelberg game in Section~\ref{Sec:Model}. In Section~\ref{Sec:Solution}, we analyze the data demand, the optimal pricing as well as sponsoring using backward induction. Next we present the simulation results in Section~\ref{Sec:Simulation}. Section~\ref{Sec:Conclusion} concludes the paper.
\section{System model and game formulation}\label{Sec:Model}
We consider the data sponsored market consisting of three players: SP, CP and MUs. Their interactions are modeled as a two-stage game with complete information.
\subsection{MUs' data demand}
Consider a set of MUs ${\cal N} \buildrel \Delta \over = \{ 1, \ldots ,N\}$, each MU $i \in {\cal N}$ determines their individual data demand, denoted by $x_i$, where $x_i > 0$. Then, let $\mathbf{x} \buildrel \Delta \over = ({x_1}, \ldots ,{x_N})$ and ${\mathbf{x}}_{-i}$ represent the data demand profile of all the MUs and all other MUs other than MU~$i$, respectively. Then, the utility of MU $i$ is given by:
\begin{multline}\label{Eq:2}
{u_i}({x_i},{\mathbf{x}}_{-i}, {p^{\mathrm{u}}}, {\bf{\theta}}_i) =  {a_i}{x_i} - {b_i}{x_i}^2 + {x_i}\sum\nolimits_{j \in {\cal N}} {{g_{ij}}{x_j}} \\ -{c}{\Big(\sum\nolimits_{j \in \mathcal{N}} {{x_j}} \Big)^2}- {p^{\mathrm{u}}}(1-\theta_i){x_i}.
\end{multline}
The first term represents the internal effects that MU $i$ obtains from consuming the data, for which we use the linear-quadratic function to characterize the decreasing marginal returns~\cite{candogan2012optimal}. $a_i, b_i> 0$ are personal type factors capturing the MU heterogeneity. Similar to~\cite{candogan2012optimal}, the second term ${x_i}{\sum _{j \in \mathcal{N}}}{g_{ij}}{x_j}$ is the external benefits gained from network effects, where $g_{ij}$ denotes the influence of MU $j$ on MU $i$. In social network, the MUs influence each other by social behaviors via their relationships, especially in socially-aware service market~\cite{gong2015network}. In this paper, we assume $g_{ij}=g_{ji}$, which means the social relations are reciprocal.

Since the SP has a limited network capacity, we further introduce the third term to indicate the congestions, and thus we apply the quadratic sum form ${c}{({\sum _{j \in {\cal N}}}{x_j})^2}$, similar to~\cite{gong2015network}. The last term indicates the costs, consisting of the price charged by SP, $p^{\mathrm{u}}$, and the sponsorship provided by CP, $\theta_i$.

\subsection{CP's sponsoring and SP's pricing}
The sponsorship factor $\theta_i$ ($\theta_i \in [0,1]$) for each MU $i$ is decided by the CP. The CP's profit includes an advertisement utility, and a component depending on its sponsorship. The cost of the CP associated with the sponsoring is denoted as ${p^{\mathrm{u}}} \sum\nolimits_{i \in \mathcal {N}} {{x_i}\theta_i}$. Thus, the profit of the CP is formulated as:
\begin{equation}\label{Eq:3}
\mathscr{P} =\gamma \sum\nolimits_{i \in \mathcal{N}} {(s{x_i} - t{x_i}^2)} - {p^{\mathrm{u}}} \sum\nolimits_{i \in \mathcal {N}} {{x_i}{\theta_i}}.
\end{equation}
We also use the linear-quadratic function with the decreasing marginal return property to transform the MUs' data demand to the monetary revenue of CP. $\gamma$ is an adjustable parameter representing the equivalent monetary worth of MUs' data demand, and $s, t> 0$ are coefficients capturing the concavity extent of the function.

The decision variable of the SP is price $p^{\mathrm{u}}$ and we adopt the uniform pricing in this paper\footnote{Generally, the data traffic service fee is the same for all MUs.}. We suppose that the revenue of the SP comes from the payment from the MUs. Then, the revenue of the SP is expressed as:
\begin{equation}\label{Eq:4}
\Pi  = {p^{\mathrm{u}}} \sum\nolimits_{i \in {\cal N}} {{x_i}}.
\end{equation}
\subsection{Two-stage game}
As illustrated in Fig.~\ref{Fig:Framework}, we model the interaction among the SP, CP and MUs as a two-stage game. In Stage~I, the SP and the CP act simultaneously as the leaders of the two-stage game. The SP decides the price $p^{\mathrm{u}}$ to maximize its revenue in~(\ref{Eq:4}), and the CP determines the sponsoring factor ${\bf{\Theta}} = [\theta_1, \theta_2, \ldots, \theta_N]^\top$ to maximize its profit in~(\ref{Eq:3}). 
In Stage~II, the MUs determine the data demand to maximize its individual utility in~(\ref{Eq:2}), acting as the followers of this game. 

\begin{figure}[t]
\centering
\includegraphics[width=.35\textwidth]{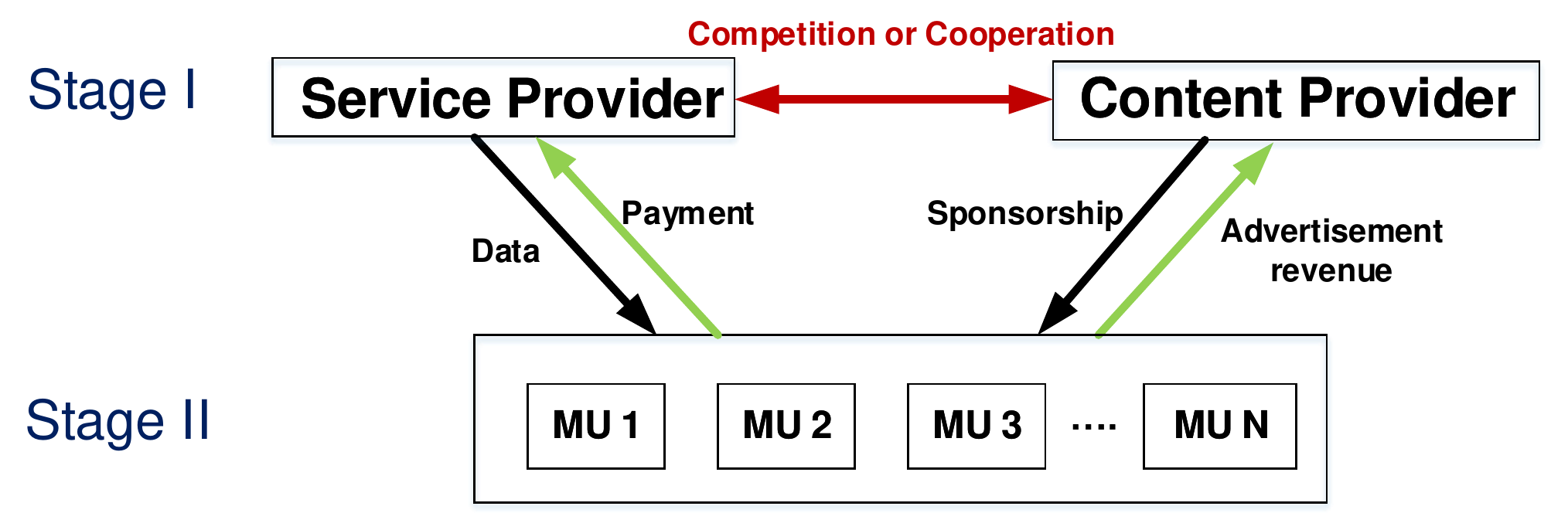}
\caption{Two-stage Stackelberg game model of the interactions among SP, CP, and MUs in the data sponsored market.}\label{Fig:Framework}
\end{figure}

\section{Game equilibrium analysis}\label{Sec:Solution}
\subsection{Stage II: MU data demand equilibrium}
In the sub-game $\mathcal{G}^u= \{\mathcal{N},\{u_i\}_{i \in \mathcal{N}}, [0, +\infty)^N\}$, the best response function of MU $i$ can be obtained in the following proposition according to the first order derivative condition.
\begin{proposition}
Given price $p^{\mathrm{u}}$ and the sponsorship ${\bf{\Theta}}$, and the data demand profile without MU $i$, i.e. ${\bf{x}}_{-i}$, the best response of MU $i$ is obtained as:
\begin{equation}\label{Eq:5}
\mathscr{F}({\bf{x}}_{-i}) = \max\left\{ \frac{{{a_i} - p^{\mathrm{u}}(1 - \theta_i )}}{{2{b_i} + 2c}} + \sum\nolimits_{j \in \mathcal{N}} {{x_j}\frac{{{g_{ij}} - 2c}}{{2{b_i} + 2c}}} \right\}.
\end{equation}

\end{proposition}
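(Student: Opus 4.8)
The plan is to treat the Stage~II problem as a collection of $N$ decoupled single-variable maximizations: for a fixed price $p^{\mathrm{u}}$, sponsorship $\theta_i$, and rival profile $\mathbf{x}_{-i}$, MU~$i$ maximizes $u_i(x_i,\mathbf{x}_{-i},p^{\mathrm{u}},\theta_i)$ over the feasible set $x_i\in[0,+\infty)$. The best response in~(\ref{Eq:5}) is exactly the maximizer of this one-dimensional program, so the whole argument reduces to (i) verifying that $u_i$ is strictly concave in $x_i$, (ii) imposing the first-order stationarity condition, and (iii) projecting the resulting unconstrained optimum onto the non-negative half-line.

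First I would establish concavity. Isolating the $x_i$-dependence in~(\ref{Eq:2}), the network term $x_i\sum\nolimits_{j\in\mathcal{N}}g_{ij}x_j$ contributes $g_{ii}x_i^2$ and the congestion term $-c\big(\sum\nolimits_{j\in\mathcal{N}}x_j\big)^2$ contributes $-c\,x_i^2$, so the coefficient of $x_i^2$ in $u_i$ is $-b_i+g_{ii}-c$. Adopting the standard convention $g_{ii}=0$ (an MU exerts no network externality on itself), the second derivative is $\partial^2 u_i/\partial x_i^2=-2b_i-2c<0$ since $b_i,c>0$. Hence $u_i$ is strictly concave in $x_i$, and any interior stationary point is the unique global maximizer of the unconstrained problem.

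Next I would compute the first-order condition. Differentiating~(\ref{Eq:2}) gives
\begin{equation*}
\frac{\partial u_i}{\partial x_i}=a_i-2b_ix_i+\sum\nolimits_{j\in\mathcal{N}}g_{ij}x_j-2c\sum\nolimits_{j\in\mathcal{N}}x_j-p^{\mathrm{u}}(1-\theta_i),
\end{equation*}
where the network and congestion sums supply the cross-demand dependence. Setting this to zero, the key algebraic step is to collect the self-terms: the $-2b_ix_i$ from the quadratic cost and the $-2c\,x_i$ extracted from the congestion sum combine to form the denominator $2b_i+2c$, while the surviving rival terms $g_{ij}x_j-2c\,x_j$ assemble the numerator of the summation. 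Solving for $x_i$ then yields precisely the affine expression inside the braces of~(\ref{Eq:5}).

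The only genuine subtlety — and the step I would treat most carefully, since the rest is routine calculus — is the boundary constraint. When the effective per-unit cost $p^{\mathrm{u}}(1-\theta_i)$ is large relative to $a_i$, the unconstrained stationary point can become negative, yet the strategy space is $[0,+\infty)$. By strict concavity, in that regime the objective is strictly decreasing throughout the feasible region, so the constrained maximizer is $x_i=0$; this is exactly what the outer $\max\{\,\cdot\,,0\}$ in~(\ref{Eq:5}) encodes. I would also flag the indexing convention: under $g_{ii}=0$ the summation should be read as effectively running over $j\neq i$, so that the right-hand side of~(\ref{Eq:5}) depends only on $\mathbf{x}_{-i}$, as the notation $\mathscr{F}(\mathbf{x}_{-i})$ requires.
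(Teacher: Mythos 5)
Your proof is correct and follows the same route the paper takes: the paper derives~(\ref{Eq:5}) directly from the first-order derivative condition on $u_i$, which is exactly your stationarity computation, and your added checks (strict concavity via $\partial^2 u_i/\partial x_i^2 = -2b_i-2c<0$ under $g_{ii}=0$, the projection onto $[0,+\infty)$ explaining the outer $\max$, and the observation that the sum effectively runs over $j\neq i$, consistent with the zero diagonal of the matrix $\bf{G}$ defined later) are sound elaborations of details the paper leaves implicit. No gaps.
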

Similar to~\cite{candogan2012optimal}, to ensure that each MU has no incentive to unboundedly increase its data demand, we make a general assumption as follows.
\begin{assumption}
$\sum\nolimits_{j \in \mathcal{N}} {\frac{{{g_{ij}} - 2c}}{{2{b_i} + 2c}}} < 1, \forall i$.
\end{assumption}
\begin{proposition}
Under Assumption 1, the existence and the uniqueness of Nash Equilibrium of sub-game $\mathcal{G}^u$ can be established.
\end{proposition}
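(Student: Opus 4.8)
The plan is to treat $\mathcal{G}^u$ as a concave game and split the claim into \emph{existence} and \emph{uniqueness}. For existence, I would first verify from~(\ref{Eq:2}) that each $u_i$ is strictly concave in its own demand, since $\partial^2 u_i/\partial x_i^2 = -2(b_i+c)<0$ (taking $g_{ii}=0$), so the best response in~(\ref{Eq:5}) is single-valued and continuous. Assumption~1 bounds the slope of each best response, which lets me confine the demand profiles to a compact convex box $[0,\bar{x}]^N$ without loss of generality, since no MU then has an incentive to demand beyond $\bar{x}$. Applying the Debreu--Glicksberg--Fan fixed-point theorem (equivalently Rosen's existence result for concave games) to the continuous best-response map yields at least one Nash equilibrium.

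For uniqueness, I would exploit the affine structure of~(\ref{Eq:5}). Writing the interior best response as $x_i=\alpha_i+\sum_{j\neq i}\beta_{ij}x_j$ with $\beta_{ij}=(g_{ij}-2c)/(2b_i+2c)$, the equilibrium is the fixed point of $T(\mathbf{x})=\big(\max\{0,\alpha_i+\sum_{j\neq i}\beta_{ij}x_j\}\big)_i$. Since the projection $\max\{0,\cdot\}$ onto $[0,\infty)$ is nonexpansive, it suffices to show the linear part $B=(\beta_{ij})$ is a contraction, i.e.\ its spectral radius is strictly below one. Assumption~1 states precisely that each row sum $\sum_{j}\beta_{ij}$ is less than one, which---via the Perron--Frobenius bound $\rho(B)\le\max_i\sum_j\beta_{ij}$---forces $\rho(B)<1$. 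Banach's fixed-point theorem then delivers a unique fixed point, hence a unique equilibrium; equivalently $I-B$ is invertible, so the interior linear system has a single solution.

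Because the model assumes the reciprocal structure $g_{ij}=g_{ji}$, I would also record the cleaner potential-game route as an alternative: the common quadratic form admits the exact potential $\Phi(\mathbf{x})=\sum_i[(a_i-p^{\mathrm{u}}(1-\theta_i))x_i-b_ix_i^2]+\tfrac12\sum_{i\neq j}g_{ij}x_ix_j-c(\sum_i x_i)^2$, whose gradient reproduces the first-order conditions of every $u_i$. Equilibria then coincide with the maximizers of $\Phi$ over $[0,\infty)^N$, and strict concavity of $\Phi$ (a negative-definite Hessian) would yield existence and uniqueness simultaneously.

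The main obstacle is converting Assumption~1 into a genuine contraction or definiteness certificate. The assumption controls the \emph{signed} row sums $\sum_j\beta_{ij}$, whereas a clean $\ell_\infty$ contraction would require the absolute row sums $\sum_j|\beta_{ij}|<1$, and negative definiteness of $\Phi$'s Hessian would require $\sum_{j\neq i}|2c-g_{ij}|<2b_i+2c$; these match Assumption~1 only when the coefficients keep a single sign (e.g.\ $g_{ij}\ge 2c$, so $\beta_{ij}\ge0$ and the Perron--Frobenius bound applies). The delicate part is thus handling mixed-sign externality-versus-congestion terms together with the kink from the non-negativity projection, which I would address by invoking monotonicity of the best-response map (a weighted max-norm contraction) so that the signed condition in Assumption~1 already suffices.
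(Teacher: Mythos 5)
The paper itself does not prove this proposition; it simply defers to an external reference, so there is no in-paper argument to compare against. Your skeleton (existence via concavity of each $u_i$ in $x_i$ plus a fixed-point theorem on a compact box; uniqueness via a contraction or, using $g_{ij}=g_{ji}$, via strict concavity of an exact potential) is the standard and appropriate route, and your potential $\Phi$ is indeed exact: its Hessian is $-(2{\bf{\Lambda}}-{\bf{G}})$, so the potential-game route is equivalent to the positive definiteness of ${\bf{K}}^{-1}$ that the paper asserts without proof in Proposition~3.

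However, the obstacle you flag at the end is a genuine gap, and your proposed escape does not close it. Assumption~1 bounds the \emph{signed} row sums $\sum_j \beta_{ij}$ with $\beta_{ij}=(g_{ij}-2c)/(2b_i+2c)$, and this does not control the spectral radius of $B$ when the $\beta_{ij}$ have mixed signs: for instance, off-diagonal entries $\beta_{12}=\beta_{21}=-2$ give row sums $-2<1$ but $\rho(B)=2$, so $T$ is not a contraction in any norm and Banach's theorem does not apply; likewise $2{\bf{\Lambda}}-{\bf{G}}$ need not be diagonally dominant or positive definite, so the potential need not be concave. Your final sentence invokes ``monotonicity of the best-response map,'' but the map $T$ is monotone (order-preserving) only when every $\beta_{ij}\ge 0$, i.e.\ $g_{ij}\ge 2c$ for all pairs --- an additional hypothesis nowhere stated, and one that is implausible here since $c$ is a congestion coefficient that can dominate weak social ties. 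Moreover, even invertibility of $I-B$ is not enough for uniqueness once the $\max\{0,\cdot\}$ kink is present; one needs a P-matrix or positive-definiteness condition on $I-B$ for the associated linear complementarity problem to have a unique solution. To make the proof sound you must either strengthen Assumption~1 to an absolute-row-sum (diagonal dominance) or spectral condition on $2{\bf{\Lambda}}-{\bf{G}}$, or restrict to the single-sign regime $g_{ij}\ge 2c$; as written, the uniqueness step does not follow from the stated hypothesis.
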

\begin{proof}
Please refer to~\cite{xiong2017sponsor} for more details.
\end{proof}

For ease of presentation, we define the notations as follows.\begin{footnotesize}
${\bf{G}} = \left[ {\begin{array}{*{20}{c}}
0&{{g_{12}} - 2c}& \cdots &{{g_{1N}} - 2c}\\
{{g_{21}} - 2c}&0& \cdots &{{g_{2N}} - 2c}\\
 \vdots & \vdots & \ddots & \vdots \\
{{g_{N1}} - 2c}&{{g_{N2}} - 2c}& \cdots &0
\end{array}} \right]$,
${\bf{\Lambda}} = \left[ {\begin{array}{*{20}{c}}
{{b_1} + c}&0& \cdots &0\\
0&{{b_2} + c}& \cdots &0\\
 \vdots & \vdots & \ddots & \vdots \\
0&0& \cdots &{{b_N} + c}
\end{array}} \right]$
 \end{footnotesize}, ${\bf{a}} = [a_1, a_2, \ldots, a_N]^\top$, ${\bf{1}} = [1, \ldots, 1]^\top$, ${\bf{I}}$ is the $n \times n$ identity matrix and ${\bf{K}} = (2{\bf{\Lambda }}-{\bf{G}})^{-1}$. Similar to~\cite{xiong2017sponsor}, we consider the ideal situation, as described in the assumption as follows.
\begin{assumption}
All MUs have the positive data demand\footnote{In the typical market, the monopolist seller wants to charge individuals low enough (lower than a threshold), so that all consumers would like to purchase a positive amount of goods~\cite{candogan2012optimal}. Specifically, if MUs are charged appropriately, none of MUs chooses zero data demand.} at the Stackelberg equilibrium, i.e., $x_i>0, \forall i$.
\end{assumption}
To ease the description, we can rewrite~(\ref{Eq:5}) in a matrix form and have the following proposition directly.
\begin{proposition}
The matrix format of best response of all the MUs with respect to the data demand is
\begin{equation}\label{Eq:6}
{\bf{x^*}}({\bf{\Theta}}, p^{\mathrm{u}}) = -{\bf{K}}\left[p^{\mathrm{u}}({\bf{1}} - {\bf{\Theta }}) - {\bf{a}}\right],
\end{equation}where $\bf K$ is positive definite matrix.
\end{proposition}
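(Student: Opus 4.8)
The plan is to read (\ref{Eq:6}) off from the $N$ scalar best responses in (\ref{Eq:5}) by stacking them into one linear system and inverting, treating the positive-definiteness claim as the only genuinely non-mechanical ingredient. First I would invoke Assumption~2: since every MU has strictly positive demand at equilibrium, the outer $\max$ in (\ref{Eq:5}) is inactive and each best response is just the interior first-order condition $(2b_i+2c)x_i - \sum_{j\neq i}(g_{ij}-2c)x_j = a_i - p^{\mathrm{u}}(1-\theta_i)$. I would then match the coefficients to the matrices defined just before the proposition: the diagonal entries $2(b_i+c)$ are exactly the diagonal of $2{\bf\Lambda}$, while the off-diagonal entries $-(g_{ij}-2c)$ coincide with the off-diagonal part of $-{\bf G}$ (whose diagonal is zero). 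Hence the stacked system is $(2{\bf\Lambda}-{\bf G}){\bf x} = {\bf a} - p^{\mathrm{u}}({\bf 1}-{\bf\Theta})$, and left-multiplying by ${\bf K}=(2{\bf\Lambda}-{\bf G})^{-1}$ gives ${\bf x}^* = {\bf K}[{\bf a}-p^{\mathrm{u}}({\bf 1}-{\bf\Theta})] = -{\bf K}[p^{\mathrm{u}}({\bf 1}-{\bf\Theta})-{\bf a}]$, which is (\ref{Eq:6}). Invertibility of $2{\bf\Lambda}-{\bf G}$, and thus existence of ${\bf K}$, is already secured by Proposition~2 under Assumption~1.

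It then remains to show that ${\bf K}$ is positive definite. Because the inverse of a symmetric positive-definite matrix is again symmetric positive definite, it suffices to establish these two properties for $2{\bf\Lambda}-{\bf G}$. Symmetry is immediate: ${\bf\Lambda}$ is diagonal and the reciprocity assumption $g_{ij}=g_{ji}$ makes ${\bf G}$ symmetric, so $2{\bf\Lambda}-{\bf G}$ is symmetric and its eigenvalues are real. For positive definiteness I would argue by strict diagonal dominance via the Gershgorin circle theorem: every eigenvalue lies in a disk centered at the positive diagonal entry $2b_i+2c$ with radius $\sum_{j\neq i}|g_{ij}-2c|$, so all eigenvalues are strictly positive whenever $2b_i+2c > \sum_{j\neq i}|g_{ij}-2c|$ for each $i$. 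This bound is exactly Assumption~1 rearranged, stating that each row sum of the normalized net-influence coefficients stays below one; the property then transfers to ${\bf K}$ by inversion.

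The main obstacle is precisely this last reconciliation of Assumption~1 with the Gershgorin bound. As written, Assumption~1 controls the \emph{signed} row sum $\sum_{j\in\mathcal{N}}(g_{ij}-2c)/(2b_i+2c) < 1$, whereas the diagonal-dominance argument needs the \emph{absolute} row sum $\sum_{j\neq i}|g_{ij}-2c|$; the two agree only when the off-diagonal weights $g_{ij}-2c$ are nonnegative. I would therefore make this sign convention explicit at the outset of the positive-definiteness argument, either reading Assumption~1 in the absolute-value sense or supplying the nonnegativity of the net-influence terms as a structural property of the social-influence matrix, after which the remaining bookkeeping is routine.
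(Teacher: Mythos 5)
Your derivation is exactly what the paper does: it states that \eqref{Eq:6} follows ``directly'' by stacking the interior first-order conditions from \eqref{Eq:5} into $(2{\bf\Lambda}-{\bf G}){\bf x}={\bf a}-p^{\mathrm{u}}({\bf 1}-{\bf\Theta})$ and inverting, with Assumption~2 deactivating the $\max$, so your proof is correct and takes essentially the same route. Your additional Gershgorin argument for the positive definiteness of ${\bf K}$ goes beyond the paper (which asserts it without proof), and your observation that Assumption~1 bounds the \emph{signed} rather than \emph{absolute} row sums --- so strict diagonal dominance needs the nonnegativity of $g_{ij}-2c$ to be made explicit --- identifies a real gap in the paper's stated hypotheses rather than in your argument.
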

\begin{figure*}[ht]\scriptsize
\begin{equation*}
\frac{{\partial {\mathscr P}({\bf{\Theta }})}}{{\partial {\bf{\Theta }}}} = \gamma s{p^{\mathrm{u}}}{\bf{K}}{\bf{1}} - 2t\gamma {p^{\mathrm{u}}}{{\bf{K}}^2}[{\bf{a}} - {p^{\mathrm{u}}}({\bf{1}}- {\bf{\Theta }} )] - {p^{\mathrm{u}}}{\bf{K}}{\bf{a}} + {({p^{\mathrm{u}}})^2}{\bf{K}}({\bf{1}} - {\bf{\Theta}}) - {({p^{\mathrm{u}}})^2}{\bf{K}}{\bf{\Theta}} = 0,
\end{equation*}
\begin{equation}\label{Eq1}
\Rightarrow (2t\gamma {p^{\mathrm{u}}}{\bf{K}} + 2{p^{\mathrm{u}}}{\bf{I}})\theta  = \gamma s{\bf{1}} - 2t\gamma {\bf{K}}({\bf{a}} - {p^{\mathrm{u}}}{\bf{1}}) - {\bf{a}} + {p^{\mathrm{u}}}{\bf{1}},
\end{equation}
\begin{equation}\label{Eq:7}
\Rightarrow {\bf{\Theta}}^*(p^{\mathrm{u}}, {\bf{x^*}}) = \frac{1}{{2{p^{\mathrm{u}}}}}{\left(t\gamma {\bf{K}} + {\bf{I}}\right)^{- 1}}\left[\gamma s{\bf{1}} + \left(2t\gamma {\bf{K}} + {\bf{I}}\right)\left({p^{\mathrm{u}}}{\bf{1}} - {\bf{a}}\right)\right].
\end{equation}
\begin{equation}\label{Eq2}
\frac{{\partial \Pi ({p^{\mathrm{u}}})}}{{\partial {p^{\mathrm{u}}}}} = {\bf{1}}^\top{\bf{K}}\left[{\bf{a}} - {p^{\mathrm{u}}}\left({\bf{1}} - {\bf{\Theta}} \right)\right] + {p^{\mathrm{u}}}{\bf{1}}^\top \left[ - {\bf{K}}\left({\bf{1}} - {\bf{\Theta}}\right)\right] = {\bf{1}}^\top {\bf{K}} {\bf{a}} - 2{p^{\mathrm{u}}}{\bf{1}}^\top {\bf{K}}\left({\bf{1}} - {\bf{\Theta}}\right) = 0,
\end{equation}
\begin{equation}\label{Eq:8}
\Rightarrow \{p^{\mathrm{u}}\}^*({\bf{\Theta}}, {\bf{x^*}}) = {\left[2{\bf{1}}^\top {\bf{K}} ({\bf{1}} - {\bf{\Theta}} )\right]^{ - 1}}{\bf{1}}^\top{\bf{K}} {\bf{a}}.
\end{equation}
\hrulefill
\end{figure*}
\subsection{Stage I: CP's sponsoring and SP's pricing strategies}
\subsubsection{Competition between CP and SP}
In this scenario, two entities, i.e., the SP and the CP compete with each other, and their interaction can be modeled as a non-cooperative static game. In this case, the CP aims to maximize its profit and the SP aims to maximize its revenue simultaneously and selfishly.
\begin{proposition}
The existence and the uniqueness of the Nash Equilibrium of the non-cooperative game between the CP and SP can be guanranteed under the Assumption 3.
\end{proposition}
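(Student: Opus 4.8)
The plan is to reduce the two-stage problem to a one-shot non-cooperative game between the CP and the SP and then apply a concave-game argument. Substituting the Stage-II demand equilibrium ${\bf{x}}^*({\bf{\Theta}},p^{\mathrm{u}})$ of Proposition 3 into the CP profit~(\ref{Eq:3}) and the SP revenue~(\ref{Eq:4}) turns both payoffs into explicit functions of the pair $({\bf{\Theta}},p^{\mathrm{u}})$. The competitive interaction is then precisely the static game in which the CP picks ${\bf{\Theta}}\in[0,1]^N$ to maximise $\mathscr{P}$ while the SP picks $p^{\mathrm{u}}\ge 0$ to maximise $\Pi$. I would prove the claim in three steps: (i) strict concavity of each payoff in its own variable; (ii) existence via a standard concave-game theorem; and (iii) uniqueness by solving the coupled first-order system.

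For step (i) I would differentiate once more the gradients already displayed in~(\ref{Eq1}) and~(\ref{Eq2}). Differentiating the CP gradient in ${\bf{\Theta}}$ yields the Hessian $\partial^2\mathscr{P}/\partial{\bf{\Theta}}^2 = -2(p^{\mathrm{u}})^2{\bf{K}}(t\gamma{\bf{K}}+{\bf{I}})$; because ${\bf{K}}$ is symmetric positive definite by Proposition 3 and $t,\gamma>0$, the matrix ${\bf{K}}(t\gamma{\bf{K}}+{\bf{I}})=t\gamma{\bf{K}}^2+{\bf{K}}$ is symmetric positive definite, so the Hessian is negative definite and $\mathscr{P}$ is strictly concave in ${\bf{\Theta}}$. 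Differentiating the SP gradient in $p^{\mathrm{u}}$ gives $\partial^2\Pi/\partial(p^{\mathrm{u}})^2 = -2\,{\bf{1}}^\top{\bf{K}}({\bf{1}}-{\bf{\Theta}})$, which is strictly negative exactly when aggregate demand is decreasing in price, a monotonicity secured by Assumptions 1 and 3 (the latter guaranteeing ${\bf{1}}^\top{\bf{x}}^*>0$ at equilibrium). Strict concavity makes the best responses single-valued and identifies them with~(\ref{Eq:7}) and~(\ref{Eq:8}); together with the compact convex strategy sets, the Debreu--Glicksberg--Fan existence theorem for concave games then delivers at least one pure-strategy Nash equilibrium, completing step (ii).

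For uniqueness (step (iii)) I would substitute the CP best response~(\ref{Eq:7}) into the SP best response~(\ref{Eq:8}). The decisive structural observation is that~(\ref{Eq:7}) rearranges to ${\bf{\Theta}}^*=\frac{1}{p^{\mathrm{u}}}{\bf{u}}+{\bf{v}}$ with the fixed vectors ${\bf{u}}=\frac{1}{2}(t\gamma{\bf{K}}+{\bf{I}})^{-1}[\gamma s{\bf{1}}-(2t\gamma{\bf{K}}+{\bf{I}}){\bf{a}}]$ and ${\bf{v}}=\frac{1}{2}(t\gamma{\bf{K}}+{\bf{I}})^{-1}(2t\gamma{\bf{K}}+{\bf{I}}){\bf{1}}$, so that ${\bf{1}}-{\bf{\Theta}}^*=({\bf{1}}-{\bf{v}})-\frac{1}{p^{\mathrm{u}}}{\bf{u}}$ is affine in $1/p^{\mathrm{u}}$. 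Inserting this into~(\ref{Eq:8}) cancels the $1/p^{\mathrm{u}}$ factor and collapses the fixed-point condition to a single linear scalar equation, whose unique root is $p^{\mathrm{u}}=({\bf{1}}^\top{\bf{K}}{\bf{a}}+2\,{\bf{1}}^\top{\bf{K}}{\bf{u}})/(2\,{\bf{1}}^\top{\bf{K}}({\bf{1}}-{\bf{v}}))$; back-substitution then fixes a unique ${\bf{\Theta}}^*$. Since by step (i) the first-order conditions are also sufficient, this unique stationary point is the unique Nash equilibrium.

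The hard part will not be the algebra but the feasibility and sign checks that make the interior analysis legitimate. Concretely, I must verify that the unique stationary point satisfies $p^{\mathrm{u}}>0$, ${\bf{\Theta}}^*\in[0,1]^N$ and the induced ${\bf{x}}^*>0$, and in particular that the denominator ${\bf{1}}^\top{\bf{K}}({\bf{1}}-{\bf{\Theta}}^*)>0$ appearing in both~(\ref{Eq:8}) and the second-order condition of step (i) is indeed positive. This is exactly where Assumption 3 enters: by imposing strictly positive demand at the Stackelberg equilibrium it rules out boundary solutions and certifies the required positivity. I would therefore invest the most care in deriving these range and sign conditions from Assumptions 1 and 3, as they are what validate both the concavity argument and the uniqueness computation above.
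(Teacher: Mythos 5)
Your existence argument coincides with the paper's: both establish strict concavity of $\mathscr{P}$ in ${\bf{\Theta}}$ and of $\Pi$ in $p^{\mathrm{u}}$ (your Hessian $-2(p^{\mathrm{u}})^2{\bf{K}}(t\gamma{\bf{K}}+{\bf{I}})$ is exactly the paper's $-2t\gamma(p^{\mathrm{u}})^2{\bf{K}}^2-2(p^{\mathrm{u}})^2{\bf{K}}$) and then invoke a standard concave-game existence theorem on the compact convex strategy sets. Where you genuinely diverge is uniqueness. The paper does not solve the first-order system at all: it forms the pseudo-Jacobian $\nabla{\bf{F}}$ of the two players' gradients, shows $\nabla{\bf{F}}+\nabla{\bf{F}}^\top$ is negative definite by decomposing it as $-\mathrm{A}-\mathrm{B}-\mathrm{C}$ and proving each block positive definite via congruence transformations (this is precisely where the three thresholds in Assumption~3 come from), and then cites Rosen's diagonal strict concavity criterion. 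Your route instead exploits the special structure of~(\ref{Eq:7}) --- that ${\bf{1}}-{\bf{\Theta}}^*$ is affine in $1/p^{\mathrm{u}}$ --- to collapse the best-response intersection to one linear scalar equation. This is more elementary, avoids the heavy congruence algebra, and produces a closed-form candidate equilibrium that the paper never obtains; the paper's approach, in exchange, certifies uniqueness over the entire constrained strategy space without any interiority discussion.

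That trade-off is exactly where your argument has a gap you acknowledge but do not close. Uniqueness of the solution to the coupled stationarity conditions~(\ref{Eq:7})--(\ref{Eq:8}) only establishes that there is one \emph{interior} equilibrium. The game is played on ${\bf{\Theta}}\in[0,1]^N$ and a bounded price interval, so the true best responses are the clipped versions of~(\ref{Eq:7}) and~(\ref{Eq:8}); a fixed point of the clipped maps could sit on the boundary (e.g.\ some $\theta_i^*=0$) and would not be detected by your linear equation. You assert that Assumption~3 ``rules out boundary solutions'' by imposing positive demand, but positive demand is Assumption~2, not Assumption~3, and in any case $x_i^*>0$ constrains ${\bf{1}}^\top{\bf{K}}[{\bf{a}}-p^{\mathrm{u}}({\bf{1}}-{\bf{\Theta}})]$, which is a different quantity from the interiority of ${\bf{\Theta}}^*$ in $[0,1]^N$ and from the positivity of ${\bf{1}}^\top{\bf{K}}({\bf{1}}-{\bf{\Theta}})$ that your second-order condition needs. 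To make your route rigorous you would have to show either that the unconstrained best responses land in the feasible sets throughout the region delimited by Assumption~3, or that no boundary fixed point can exist there --- which is nontrivial work you have deferred. Rosen's condition, as used in the paper, is the standard device for bypassing exactly this difficulty, which is presumably why the authors chose it despite the heavier matrix computations.
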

\begin{assumption}
The total payment from MU $i$ to SP is larger than a threshold, i.e., ${p^{\mathrm{u}}}(1 - {\theta _i}) > \max \left\{\gamma s,\frac{{{a_i}}}{3}, {\left[{(\sqrt {2\gamma t{\bf{K}}}  + {\bf{I}})^{ - 1}}\sqrt {\frac{{\gamma t{\bf{K}}}}{2}} {\bf{a}}\right]_i}\right\}.$
\end{assumption}
\begin{proof}
First, using the first order derivative condition, we derive the best response function of CP, given the strategy of SP, which is expressed by~(\ref{Eq:7}). Similarly, given the strategy of CP, we obtain the best response of SP as shown in~(\ref{Eq:8}). From~(\ref{Eq:6}), we can easily derive that $\frac{{\partial {\bf{x}}}}{{\partial {\bf{\Theta}}}} = {p^{\mathrm{u}}}{\bf{K}}$ and $\frac{{\partial {\bf{x}}}}{{\partial {p^{\mathrm{u}}}}} = - {\bf{K}}({\bf{1}} - {\bf{\Theta }})$. The steps of obtaining the best response function of CP and SP are described in~(\ref{Eq1}) and~(\ref{Eq2}), respectively.

\textbf{Existence:} The CP's strategy space is defined to be within $[0, 1]$, which is a nonempty, convex, and compact subset of the Euclidean space. Then, we take the second partial derivative of CP's objective function $\mathscr P({\bf{\Theta }})$ with respect to its own decision variable $\bf{\Theta }$, i.e.,
\begin{equation}
\frac{{{\partial ^2}{\mathscr P}({\bf{\Theta }})}}{{\partial {{\bf{\Theta }}^2}}} =  - 2t\gamma {({p^{\mathrm{u}}})^2}{{\bf{K}}^2} - 2{({p^{\mathrm{u}}})^2}{\bf{K}}<0.
\end{equation}
Therefore, the objective function of CP, ${\mathscr P}(\bf\Theta)$, is continuous and strictly concave with respect to $\bf \Theta$. Similarly, the objective function of SP, $\Pi(p^{\mathrm{u}})$, is strictly concave with respect to its decision variable $p^{\mathrm{u}}$, since we conclude that the second order derivative of $\Pi$ with respect to $p^{\mathrm{u}}$ is less than 0, i.e.,
\begin{equation}
\frac{{\partial^2 {\Pi}({p^{\mathrm{u}}})}}{{\partial {{({p^{\mathrm{u}}})}^2}}} =  - 2{\bf{1}}^\top {\bf{K}}({\bf{1}} - {\bf{\Theta }})<0.
\end{equation}
Recall from Assumption 2, $p^{\mathrm{u}}$ is smaller than a threshold, and $p^{\mathrm{u}}$ is larger than 0. Accordingly, the price $p^{\mathrm{u}}$ is a nonempty convex and compact subset of the Euclidean space. Thus, the Nash equilibrium exists in this non-cooperative sub-game between CP and SP~\cite{han2012game}.

\textbf{Uniqueness:} The Jacobian matrix of point-to-set mapping with respect to the utility profile of CP and SP, $\nabla {\bf{F}}=\nabla {\bf{F}}({\mathscr P}({\bf{\Theta }}), \Pi ({p^{\mathrm{u}}}))=
\left[ {\begin{array}{*{20}{c}}
{\frac{{{\partial ^2}{\mathscr P}({\bf{\Theta }})}}{{\partial {{\bf{\Theta }}^2}}}}&{\frac{{{\partial ^2}{\mathscr P}({\bf{\Theta }})}}{{\partial {\bf{\Theta }}\partial {p^{\mathrm{u}}}}}}\\
{\left(\frac{{{\partial ^2}\Pi ({p^{\mathrm{u}}})}}{{\partial {\bf{\Theta }}\partial {p^{\mathrm{u}}}}}\right)^\top}&{\frac{{{\partial ^2}\Pi ({p^{\mathrm{u}}})}}{{\partial {{\left({p^{\mathrm{u}}}\right)}^2}}}}
\end{array}} \right].$ Then, $\nabla {\bf{F}} + \nabla {\bf{F}}^\top$ is represented by~(\ref{Eq:J}). We decompose the matrix in~(\ref{Eq:J}) into the form of $\rm -A - B - C$, and the matrices are shown in~(\ref{Eq:Decompose1}). If we can conclude that $\nabla {\bf{F}} + \nabla {\bf{F}}^\top$ is negative definite under Assumption 3, then, $\nabla {\bf{F}}$ is diagonally strictly concave, and accordingly the Nash equilibrium of this non-cooperative sub-game is unique~\cite{han2012game}. To prove $\rm -A - B - C$ is negative definite, we can prove that matrices $\rm A$, $\rm B$ and $\rm C$ are all positive definite. Based on the matrix congruence theorem, if $\rm Q' = P^\top Q P$ and $\rm P$ is invertible, then $\rm Q$ and $\rm Q'$ have the same numbers of positive, negative, and zero eigenvalues~\cite{gantmakher1998theory}. We use the matrix $ {\rm P_1} = \left[ {\begin{array}{*{20}{c}}
{\bf{I}}&{{{\left( - 4\gamma t{{({p^{\mathrm{u}}})}^2}{{\bf{K}}^2}\right)}^{ - 1}}2\gamma t{{\bf{K}}^2}\left[{\bf{a}} - 2{p^{\mathrm{u}}}\left({\bf{1}} - {\bf{\Theta }}\right)\right]}\\
0&1
\end{array}} \right]$ and we obtain the congruence of matrix $\rm A$, $\rm A'$ in~(\ref{Eq:A}). Now we only need to prove that $\rm A'$ is positive definite, i.e., $2{{\bf{1}}^ \top }{\bf{K}}({\bf{1}} - {\bf{\Theta }}) - \gamma t{{\bf{K}}^2}{{[{\bf{a}} - 2{p^{\mathrm{u}}}({\bf{1}} - {\bf{\Theta }})]}^ \top }\frac{{{{\bf{K}}^2}}}{{{{({p^{\mathrm{u}}})}^2}}}[{\bf{a}} - 2{p^{\mathrm{u}}}({\bf{1}} - {\bf{\Theta }})] > 0$. Then, we have $2{{\bf{\Theta }}^ \top }{\bf{K}}({\bf{1}} - {\bf{\Theta }}) + 2{\left( {{\bf{1}} - {\bf{\Theta }}} \right)^ \top }{\bf{K}}({\bf{1}} - {\bf{\Theta }}) - {[\frac{{\sqrt {t\gamma } }}{{{p^{\mathrm{u}}
}}}{\bf{a}} - 2\sqrt {t\gamma } ({\bf{1}} - {\bf{\Theta }})]^ \top }{{\bf{K}}^2}[\frac{{\sqrt {t\gamma } }}{{{p^{\mathrm{u}}}}}{\bf{a}} - 2\sqrt {t\gamma } ({\bf{1}} - {\bf{\Theta }})] > 0$. Accordingly, ${\bf{1}} - {\bf{\Theta }} > \sqrt {\bf{K}} [\frac{{\sqrt {t\gamma } }}{{2{p^{\mathrm{u}}}}}{\bf{a}} - 2\sqrt {t\gamma } ({\bf{1}} - {\bf{\Theta }})]$. At last, we derive ${p^{\mathrm{u}}}({\bf{1}} - {\bf{\Theta }}) > \left[ {{{\left(\sqrt {2\gamma t{\bf{K}}}  + {\bf{I}}\right)}^{ - 1}}\sqrt {\frac{{\gamma t{\bf{K}}}}{2}} {\bf{a}}} \right]$. Under the same procedure, we use the matrix ${{\rm{P}}_{\rm{2}}} = \left[ {\begin{array}{*{20}{c}}
{\bf{I}}&{{{\left(3{{\left({p^{\mathrm{u}}}\right)}^2}{\bf{K}}\right)}^{ - 1}}\gamma s{\bf{K1}}}\\
0&1 \end{array}} \right]$ to obtain $\rm B'$ in~(\ref{Eq:B}), then we need to prove that ${{\bf{1}}^ \top }{\bf{K}}({\bf{1}} - {\bf{\Theta }}) - {\gamma ^2}{s^2}{{\bf{1}}^\top }\frac{{\bf{K}}}{{3{{({p^{\mathrm{u}}})}^2}}}{\bf{1}} = {{\bf{\Theta }}^\top }{\bf{K}}({\bf{1}} - {\bf{\Theta }}) + {\left( {{\bf{1}} - {\bf{\Theta }}} \right)^ \top }{\bf{K}}({\bf{1}} - {\bf{\Theta }}) - {\left(\frac{{\gamma s}}{{\sqrt 3 {p^{\mathrm{u}}}}}{\bf{1}}\right)^ \top }{{\bf{K}}^2}\frac{{\gamma s}}{{\sqrt 3 {p^{\mathrm{u}}}}}{\bf{1}} > 0$. Accordingly, we have ${p^{\mathrm{u}}
}({\bf{1}} - {\bf{\Theta }}) > \frac{{\gamma s}}{{\sqrt 3 }}{\bf{1}}$. Similarly, using the matrix ${{\rm{P}}_{\rm{3}}} = \left[ {\begin{array}{*{20}{c}}
{\bf{I}}&{ - {{\left({{({p^{\mathrm{u}}})}^2}{\bf{K}}\right)}^{ - 1}}\left[ {{\bf{a}} - 4{p^{\mathrm{u}}}\left( {{\bf{1}} - {\bf{\Theta }}} \right)} \right]}\\
0&1 \end{array}} \right]$, we have $\rm C'$ in~(\ref{Eq:C}). To prove $\rm C'$ is positive definite, we need to prove ${{\bf{1}}^ \top }{\bf{K}}({\bf{1}} - {\bf{\Theta }}) - {{[{\bf{a}} - 4{p^{\mathrm{u}}}({\bf{1}} - {\bf{\Theta }})]}^ \top }\frac{{\bf{K}}}{{{{({p^{\mathrm{u}}})}^2}}}[{\bf{a}} - 4{p^{\mathrm{u}}
}({\bf{1}} - {\bf{\Theta }})] > 0$. With simple transformations, we have ${p^{\mathrm{u}}}({\bf{1}} - {\bf{\Theta }}) > \frac{{\bf{a}}}{5}$ in the final step. Accordingly, we have proved that the positive definiteness of $\rm A$, $\rm B$ and $\rm C$ are easily guaranteed, if Assumption 3 holds. Thus, the proof is completed.
\begin{figure*}[ht]\scriptsize
\begin{equation}\label{Eq:J}
\left[ {\begin{array}{*{20}{c}}
{ - 4t\gamma {{({p^{\mathrm{u}}})}^2}{{\bf{K}}^2} - 4{{({p^{\mathrm{u}}})}^2}{\bf{K}}}&{\gamma s{\bf{K}}{\bf{1}} - 2\gamma t{{\bf{K}}^2}\left[{\bf{a}} - 2{p^{\mathrm{u}}}\left({\bf{1}} - {\bf{\Theta }}\right)\right] - {\bf{K}}\left[{\bf{a}} - 4{p^{\mathrm{u}}}({\bf{1}} - {\bf{\Theta }})\right]}\\
\left( \gamma s{\bf{K1}} - 2\gamma t{{\bf{K}}^2}\left[{\bf{a}} - 2{p^{\mathrm{u}}}({\bf{1}} - {\bf{\Theta }})\right] - {\bf{K}}\left[{\bf{a}} - 4{p^{\mathrm{u}}}({\bf{1}} - {\bf{\Theta }})\right] \right)^\top &{ - 4{{\bf{1}}^ \top }{\bf{K}}({\bf{1}} - {\bf{\Theta }})}
\end{array}} \right]
\end{equation}

\begin{equation}\label{Eq:Decompose1}
\begin{footnotesize}
\underbrace {\left[ {\begin{array}{*{20}{c}}
{4t\gamma {{({p^{\mathrm{u}}})}^2}{{\bf{K}}^2}}&{2\gamma t{{\bf{K}}^2}[{\bf{a}} - 2{p^{\mathrm{u}}}({\bf{1}} - {\bf{\Theta }})]}\\
{2\gamma t{{[{\bf{a}} - 2{p^{\mathrm{u}}}({\bf{1}} - {\bf{\Theta }})]}^ \top }{{\bf{K}}^2}}&{2{{\bf{1}}^ \top }{\bf{K}}({\bf{1}} - {\bf{\Theta }})}
\end{array}} \right]}_{\rm A}, \underbrace {\left[ {\begin{array}{*{20}{c}}
{3{{({p^{\mathrm{u}}})}^2}{\bf{K}}}&{ - \gamma s{\bf{K}}{\bf{1}}}\\
{ - \gamma s{{\bf{1}}^ \top }{\bf{K}}}&{{{\bf{1}}^ \top }{\bf{K}}({\bf{1}} - {\bf{\Theta }})}
\end{array}} \right]}_{\rm B}, \underbrace {\left[ {\begin{array}{*{20}{c}}
{{{({p^{\mathrm{u}}})}^2}{\bf{K}}}&{{\bf{K}}[{\bf{a}} - 4{p^{\mathrm{u}}}({\bf{1}} - {\bf{\Theta }})]}\\
{{{[{\bf{a}} - 4{p^{\mathrm{u}}}({\bf{1}} - {\bf{\Theta }})]}^ \top }{\bf{K}}}&{{{\bf{1}}^ \top }{\bf{K}}({\bf{1}} - {\bf{\Theta }})}
\end{array}} \right]}_{\rm C}
\end{footnotesize}
\end{equation}

\begin{equation}\label{Eq:A}
\rm A' = P_1^\top A P_1 = \left[ {\begin{array}{*{20}{c}}
{4t\gamma {{({p^{\mathrm{u}}})}^2}{{\bf{K}}^2}}&{\bf{0}}\\
{\bf{0}}&{2{{\bf{1}}^ \top }{\bf{K}}({\bf{1}} - {\bf{\Theta }}) - \gamma t{{\bf{K}}^2}{{[{\bf{a}} - 2{p^{\mathrm{u}}}({\bf{1}} - {\bf{\Theta }})]}^ \top }\frac{{{{\bf{K}}^2}}}{{{{({p^{\mathrm{u}}})}^2}}}[{\bf{a}} - 2{p^{\mathrm{u}}}({\bf{1}} - {\bf{\Theta }})]}
\end{array}} \right]
\end{equation}

\begin{equation}\label{Eq:B}
\rm B' = P_2^\top B P_2 = \left[ {\begin{array}{*{20}{c}}
{3{{({p^{\mathrm{u}}})}^2}{\bf{K}}}&{\bf{0}}\\
{\bf{0}}&{{{\bf{1}}^ \top }{\bf{K}}({\bf{1}} - {\bf{\Theta }}) - {\gamma ^2}{s^2}{{\bf{1}}^ \top }\frac{{\bf{K}}}{{3{{({p^{\mathrm{u}}})}^2}}}{\bf{1}}}
\end{array}} \right]
\end{equation}

\begin{equation}\label{Eq:C}
\rm C' = P_3^\top C P_3 = \left[ {\begin{array}{*{20}{c}}
{{{({p^{\mathrm{u}}})}^2}{\bf{K}}}&{\bf{0}}\\
{\bf{0}}&{{{\bf{1}}^ \top }{\bf{K}}({\bf{1}} - {\bf{\Theta }}) - {{[{\bf{a}} - 4{p^{\mathrm{u}}}({\bf{1}} - {\bf{\Theta }})]}^ \top }\frac{{\bf{K}}}{{{{({p^{\mathrm{u}}})}^2}}}[{\bf{a}} - 4{p^{\mathrm{u}}}({\bf{1}} - {\bf{\Theta }})]}
\end{array}} \right]
\end{equation}

\hrulefill
\end{figure*}
\end{proof}

Then, we use the best-response dynamics for calculating the Nash equilibrium of the two-player non-cooperative game in this stage. So far, we have proved that each Nash equilibrium of sub-game in this Stackelberg game is unique under Assumptions 1 and 3. Then, we can conclude that this Stackelberg game is a weakly acyclic game, and thus the convergence of the game is guaranteed~\cite{marden2007regret}. Therefore, the best-response dynamics algorithm can achieve the Stackelberg equilibrium.

\subsubsection{Cooperation between CP and SP}\label{solution}
In the non-cooperative game, the interaction among selfish players may lead to an inefficient Nash Equilibrium (NE). In order to address the well-known inefficiency of NE of the non-cooperative game, we consider another practical cooperative setting between the CP and the SP. In this case, the interaction between the two providers is modeled as optimization problem. Thus, the objectives of the CP and SP are to maximize their aggregate payoff.

In particular, we consider the SP and CP as a single entity, referred to as a coalition. Then, in Stage~I, the CP-SP coalition determines the sponsoring and the pricing strategies jointly, with the purpose of maximizing their aggregate payoff, i.e., ${\mathscr R} = {\mathscr P} + \Pi$. Therefore, the CP-SP coalition's payoff maximization problem is formulated as follows:
\begin{equation}
\begin{aligned}
& \underset{\theta_i, p^{\mathrm{u}}}{\text{maximize}}
& & {\mathscr R} = \gamma \sum\limits_{i \in \mathcal{N}} {(s{x_i} - t{x_i}^2)} + {p^{\mathrm{u}}} \sum\limits_{i \in \mathcal {N}} {{x_i}{(1- \theta_i)}} \\
& \text{subject to}
& & {\bf{x}} = -{\bf{K}}\left[p^{\mathrm{u}}({\bf{1}} - {\bf{\Theta }}) - {\bf{a}}\right].\\
\end{aligned}\label{Eq:9}
\end{equation}

We can rewrite the objective function of~(\ref{Eq:9}) in a matrix form, and eliminate ${\bf{x}}$ from objective function with KKT condition. Then, we apply the second-order partial derivative to check its Hessian matrix. Thus, we conclude the following proposition.
\begin{proposition}
Under Assumption 3, the objective function in~(\ref{Eq:9}) is strictly concave with respect to its both decision variables $\bf \Theta$ and $p^{\mathrm{u}}$, and thus there exists unique globally optimal solutions for $\left\{{\bf {\Theta^*}}, \{p^{\mathrm{u}}\}^*\right\}$.
\end{proposition}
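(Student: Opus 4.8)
The plan is to reduce the bilevel problem in~(\ref{Eq:9}) to an unconstrained concave program in the leaders' variables and then certify concavity through the Hessian, as announced just before the statement. First I would substitute the Stage-II response~(\ref{Eq:6}) into the aggregate payoff. Writing $\mathbf{w}=p^{\mathrm{u}}(\mathbf{1}-\mathbf{\Theta})$ so that $\mathbf{x}=\mathbf{K}(\mathbf{a}-\mathbf{w})$, and using that $\mathbf{K}^{-1}=2\mathbf{\Lambda}-\mathbf{G}$ is symmetric, the sponsorship term collapses via $p^{\mathrm{u}}\mathbf{x}^\top(\mathbf{1}-\mathbf{\Theta})=\mathbf{w}^\top\mathbf{x}=\mathbf{a}^\top\mathbf{x}-\mathbf{x}^\top\mathbf{K}^{-1}\mathbf{x}$, giving the compact form $\mathscr{R}=(\gamma s\mathbf{1}+\mathbf{a})^\top\mathbf{x}-\mathbf{x}^\top(\gamma t\mathbf{I}+\mathbf{K}^{-1})\mathbf{x}$. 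Since $\mathbf{K}$ is positive definite (Proposition~3) and $\gamma,t>0$, the matrix $\gamma t\mathbf{I}+\mathbf{K}^{-1}$ is positive definite, so $\mathscr{R}$ is a strictly concave quadratic in $\mathbf{x}$.

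Next I would compute the joint Hessian of $\mathscr{R}$ in the genuine decision variables $(\mathbf{\Theta},p^{\mathrm{u}})$. Using $\partial\mathbf{x}/\partial\mathbf{\Theta}=p^{\mathrm{u}}\mathbf{K}$, $\partial\mathbf{x}/\partial p^{\mathrm{u}}=-\mathbf{K}(\mathbf{1}-\mathbf{\Theta})$ and the chain rule, the $\mathbf{\Theta}\mathbf{\Theta}$ block is $-2(p^{\mathrm{u}})^2\mathbf{N}$ with $\mathbf{N}=\gamma t\mathbf{K}^2+\mathbf{K}$ positive definite, the $p^{\mathrm{u}}p^{\mathrm{u}}$ entry is $-2(\mathbf{1}-\mathbf{\Theta})^\top\mathbf{N}(\mathbf{1}-\mathbf{\Theta})$, and the cross block coincides with the off-diagonal block displayed in~(\ref{Eq:J}). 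To show this $(N+1)\times(N+1)$ matrix is negative definite I would mirror the competitive proof: take the Schur complement of the (negative definite) $\mathbf{\Theta}\mathbf{\Theta}$ block, reducing the claim to the scalar inequality $\mathbf{d}^\top\mathbf{N}^{-1}\mathbf{d}<4(p^{\mathrm{u}})^2(\mathbf{1}-\mathbf{\Theta})^\top\mathbf{N}(\mathbf{1}-\mathbf{\Theta})$, where $\mathbf{d}$ denotes the cross block, and then bound $\mathbf{d}$ termwise. The three summands of $\mathbf{d}$ are exactly what the three lower bounds $\gamma s$, $a_i/3$, and $[(\sqrt{2\gamma t\mathbf{K}}+\mathbf{I})^{-1}\sqrt{\gamma t\mathbf{K}/2}\,\mathbf{a}]_i$ of Assumption~3 control, so the same congruence decomposition into three positive-definite matrices as in the non-cooperative case would be the natural vehicle.

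The step I expect to be the real obstacle --- and where I anticipate the nominal statement to need qualification --- is the joint negative definiteness itself. The reduction above shows that $\mathscr{R}$ depends on $(\mathbf{\Theta},p^{\mathrm{u}})$ only through the net-price vector $\mathbf{w}=p^{\mathrm{u}}(\mathbf{1}-\mathbf{\Theta})$; economically, once the SP and CP merge, only each user's effective charge $p^{\mathrm{u}}(1-\theta_i)$ matters, because the price revenue and the sponsorship outlay cancel in $\mathscr{R}=\mathscr{P}+\Pi$. Consequently, at any stationary point the direction $\big({-}(\mathbf{1}-\mathbf{\Theta})/p^{\mathrm{u}},\,1\big)$, which preserves $\mathbf{w}$ to first order, lies in the kernel of the joint Hessian: a short computation using the first-order condition $2\mathbf{N}\big(\mathbf{a}-p^{\mathrm{u}}(\mathbf{1}-\mathbf{\Theta})\big)=\mathbf{K}(\gamma s\mathbf{1}+\mathbf{a})$ shows both blocks of $\mathbf{H}\cdot\big({-}(\mathbf{1}-\mathbf{\Theta})/p^{\mathrm{u}},\,1\big)^\top$ vanish, equivalently the Schur complement above is zero. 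Hence the joint Hessian is singular, strict concavity in $(\mathbf{\Theta},p^{\mathrm{u}})$ cannot hold, and the optimizing pair is not unique.

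I would therefore prove the defensible version of the statement: strict concavity holds in the reduced variable $\mathbf{w}$ (equivalently in $\mathbf{x}$), where the Hessian is the nonsingular $-2\mathbf{N}$, which is negative definite using only $\mathbf{K}\succ0$ rather than Assumption~3. Combined with continuity of $\mathscr{R}$ on the compact feasible set ($\theta_i\in[0,1]$, and $p^{\mathrm{u}}$ confined to a bounded positive interval by the positive-demand requirement), this yields a unique optimal effective-price vector $\mathbf{w}^*$ and hence a unique optimal demand $\mathbf{x}^*=\mathbf{K}(\mathbf{a}-\mathbf{w}^*)$. The pair $(\mathbf{\Theta}^*,\{p^{\mathrm{u}}\}^*)$ is then pinned down only up to the relation $p^{\mathrm{u}}(1-\theta_i)=w_i^*$, so uniqueness of the coalition's strategy profile holds only after fixing a normalization, e.g.\ fixing $p^{\mathrm{u}}$.
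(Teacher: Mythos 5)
Your central observation is correct, and it exposes a defect in the proposition as stated rather than a gap in your own reasoning. Since $\mathscr{P}+\Pi=\gamma\sum_i(sx_i-tx_i^2)+p^{\mathrm{u}}\sum_i x_i(1-\theta_i)$ and the Stage-II response~(\ref{Eq:6}) gives $\mathbf{x}=\mathbf{K}(\mathbf{a}-\mathbf{w})$ with $\mathbf{w}=p^{\mathrm{u}}(\mathbf{1}-\mathbf{\Theta})$, the coalition objective in~(\ref{Eq:9}) really is a function of $\mathbf{w}$ alone, and your reduction $\mathscr{R}=(\gamma s\mathbf{1}+\mathbf{a})^\top\mathbf{x}-\mathbf{x}^\top(\gamma t\mathbf{I}+\mathbf{K}^{-1})\mathbf{x}$ checks out. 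This immediately defeats joint uniqueness: if $(\mathbf{\Theta}^*,p^*)$ is optimal, then for any $\lambda>1$ the pair with price $\lambda p^*$ and $\theta_i=1-(1-\theta_i^*)/\lambda\in[0,1]$ realizes the same $\mathbf{w}^*$, hence the same $\mathscr{R}$, and satisfies every constraint (all of which involve $(\mathbf{\Theta},p^{\mathrm{u}})$ only through $\mathbf{w}$). A strictly concave function on a convex set cannot have two maximizers, so strict concavity in $(\mathbf{\Theta},p^{\mathrm{u}})$ fails as well. Your Hessian computation corroborates this --- the correct $\mathbf{\Theta}\mathbf{\Theta}$ block is $-2(p^{\mathrm{u}})^2(\gamma t\mathbf{K}^2+\mathbf{K})$, the second $\mathbf{K}^2$ in the top-left entry of~(\ref{Eq:H}) being a typo, as the decomposition~(\ref{Eq:Decompose2}) confirms --- up to one sign slip: the first-order $\mathbf{w}$-preserving direction is $\bigl(+(\mathbf{1}-\mathbf{\Theta})/p^{\mathrm{u}},\,1\bigr)$, not $\bigl(-(\mathbf{1}-\mathbf{\Theta})/p^{\mathrm{u}},\,1\bigr)$; with the correct sign the Hessian does annihilate that direction at any interior stationary point, exactly as you argue.

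Your analysis also locates where the paper's own proof goes astray. The Schur-complement positivity conditions behind~(\ref{Eq:D}) and~(\ref{Eq:E}) are genuinely two-sided: the componentwise route used in the paper requires $\mathbf{1}-\mathbf{\Theta}>|\mathbf{a}/p^{\mathrm{u}}-2(\mathbf{1}-\mathbf{\Theta})|$ for $\mathrm{D}$, i.e.\ $\mathbf{a}/3<p^{\mathrm{u}}(\mathbf{1}-\mathbf{\Theta})<\mathbf{a}$, and $\mathbf{1}-\mathbf{\Theta}>|\mathbf{a}/p^{\mathrm{u}}-4(\mathbf{1}-\mathbf{\Theta})|$ for $\mathrm{E}$, i.e.\ $\mathbf{a}/5<p^{\mathrm{u}}(\mathbf{1}-\mathbf{\Theta})<\mathbf{a}/3$. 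The paper retains only the lower bounds, but the two windows are disjoint, so $\mathrm{D}$ and $\mathrm{E}$ cannot be made simultaneously positive definite by this argument --- consistent with the exact singularity of $-\mathrm{D}-\mathrm{E}-\mathrm{F}$ that you establish at stationary points. Your repaired statement is the defensible one: $\mathscr{R}$ is strictly concave in $\mathbf{w}$ (equivalently in $\mathbf{x}$), with Hessian $-2(\gamma t\mathbf{K}^2+\mathbf{K})\prec 0$ requiring only $\mathbf{K}\succ0$ rather than Assumption~3; hence the optimal effective-price vector $\mathbf{w}^*$ and demand $\mathbf{x}^*$ are unique, while $(\mathbf{\Theta}^*,\{p^{\mathrm{u}}\}^*)$ is determined only up to the fibre $p^{\mathrm{u}}(1-\theta_i)=w_i^*$ and requires a normalization (e.g.\ fixing $p^{\mathrm{u}}$) to be pinned down.
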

\begin{proof}
\begin{figure*}[ht]\scriptsize
\begin{equation}\label{Eq:H}
\left[ {\begin{array}{*{20}{c}}
{ - 2t\gamma {{({p^{\mathrm{u}}})}^2}{{\bf{K}}^2} - 2{{({p^{\mathrm{u}}})}^2}{{\bf{K}}^2}}&{\gamma s{\bf{K1}} - 2\gamma t{{\bf{K}}^2}\left[ {{\bf{a}} - 2{p^{\mathrm{u}}}\left( {{\bf{1}} - {\bf{\Theta }}} \right)} \right] - {\bf{K}}\left[ {{\bf{a}} - 4{p^{\mathrm{u}}}\left( {{\bf{1}} - {\bf{\Theta }}} \right)} \right]}\\
{{{\left( {\gamma s{\bf{K1}} - 2\gamma t{{\bf{K}}^2}\left[ {{\bf{a}} - 2{p^{\mathrm{u}}}\left( {{\bf{1}} - {\bf{\Theta }}} \right)} \right] - {\bf{K}}\left[ {{\bf{a}} - 4{p^{\mathrm{u}}}\left( {{\bf{1}} - {\bf{\Theta }}} \right)} \right]} \right)}^ \top }}&{ - 2t\gamma {{\left( {{\bf{1}} - {\bf{\Theta }}} \right)}^ \top }{{\bf{K}}^2}\left( {{\bf{1}} - {\bf{\Theta }}} \right) - 2{{\left( {{\bf{1}} - {\bf{\Theta }}} \right)}^ \top }{\bf{K}}\left( {{\bf{1}} - {\bf{\Theta }}} \right)}
\end{array}} \right]
\end{equation}
\begin{equation}\label{Eq:Decompose2}
 \underbrace {\left[ {\begin{array}{*{20}{c}}
{2t\gamma {{({p^{\mathrm{u}}})}^2}{{\bf{K}}^2}}&{2\gamma t{{\bf{K}}^2}\left[ {{\bf{a}} - 2{p^{\mathrm{u}}}\left( {{\bf{1}} - {\bf{\Theta }}} \right)} \right]}\\
{2\gamma t{{[{\bf{a}} - 2{p^{\mathrm{u}}}\left( {{\bf{1}} - {\bf{\Theta }}} \right)]}^ \top }{{\bf{K}}^2}}&{2\gamma t{{\left( {{\bf{1}} - {\bf{\Theta }}} \right)}^ \top }{{\bf{K}}^2}\left( {{\bf{1}} - {\bf{\Theta }}} \right)}
\end{array}} \right]}_{\rm{D}}, \underbrace {\left[ {\begin{array}{*{20}{c}}
{{{({p^{\mathrm{u}}})}^2}{\bf{K}}}&{{\bf{K}}\left[ {{\bf{a}} - 4{p^{\mathrm{u}}}\left( {{\bf{1}} - {\bf{\Theta }}} \right)} \right]}\\
{{{\left[ {{\bf{a}} - 4{p^{\mathrm{u}}}\left( {{\bf{1}} - {\bf{\Theta }}} \right)} \right]}^ \top }{\bf{K}}}&{{{\left( {{\bf{1}} - {\bf{\Theta }}} \right)}^ \top }{\bf{K}}\left( {{\bf{1}} - {\bf{\Theta }}} \right)}
\end{array}} \right]}_{\rm{E}}, \underbrace {\left[ {\begin{array}{*{20}{c}}
{{{({p^{\mathrm{u}}})}^2}{\bf{K}}}&{ - \gamma s{\bf{K1}}}\\
{ - \gamma s{{\bf{1}}^ \top }{\bf{K}}}&{{{\left( {{\bf{1}} - {\bf{\Theta }}} \right)}^ \top }{\bf{K}}\left( {{\bf{1}} - {\bf{\Theta }}} \right)}
\end{array}} \right]}_{\rm{F}}
\end{equation}

\begin{equation}\label{Eq:D}
{\rm{D' = }}{{\rm{P_4}}^ \top }{\rm{DP_4 = }}\left[ {\begin{array}{*{20}{c}}
{{{2t}}\gamma {{({{{p}}^{{u}}})}^{{2}}}{{\bf{K}}^{{2}}}}&{\bf{0}}\\
{\bf{0}}&{{{2t}}\gamma {{({\bf{1}}{{ - }}{\bf{\Theta }})}^ \top }{{\bf{K}}^{{2}}}({\bf{1}}{{ - }}{\bf{\Theta }}){{ - 2}}\gamma {{t}}{{\left[{\bf{a}}{{ - 2}}{{{p}}^{{u}}}\left( {{\bf{1}} - {\bf{\Theta }}} \right)\right]}^ \top }\frac{{{{\bf{K}}^{{2}}}}}{{{{({{{p}}^{{u}}})}^{{2}}}}}\left[{\bf{a}}{{ - 2}}{{{p}}^{{u}}}\left( {{\bf{1}} - {\bf{\Theta }}} \right)\right]}
\end{array}} \right]
\end{equation}

\begin{equation}\label{Eq:E}
{\rm{E' = }}{{\rm{P_5}}^ \top }{\rm{EP_5= }}\left[ {\begin{array}{*{20}{c}}
{{{({{{p}}^{{u}}})}^{{2}}}{\bf{K}}}&{\bf{0}}\\
{\bf{0}}&{{{\left( {{\bf{1}} - {\bf{\Theta }}} \right)}^ \top }{\bf{K}}\left( {{\bf{1}} - {\bf{\Theta }}} \right){{ - }}{{[{\bf{a}}{{ - 4}}{{{p}}^{{u}}}\left({{\bf{1}} - {\bf{\Theta }}} \right)]}^ \top }\frac{{\bf{K}}}{{{{({{{p}}^{{u}}})}^{{2}}}}}\left[{\bf{a}}{{ - 4}}{{{p}}^{{u}}}\left( {{\bf{1}} - {\bf{\Theta }}} \right)\right]}
\end{array}} \right]
\end{equation}

\begin{equation}\label{Eq:F}
{\rm{F' = }}{{\rm{P_6}}^ \top }{\rm{FP_6 = }}\left[{\begin{array}{*{20}{c}}
{{{({{{p}}^{{u}}})}^{{2}}}{\bf{K}}}&{\bf{0}}\\
{\bf{0}}&{{{\left( {{\bf{1}} - {\bf{\Theta }}} \right)}^ \top }{\bf{K}}\left( {{\bf{1}} - {\bf{\Theta }}} \right) - {\gamma ^{{2}}}{{{s}}^{{2}}}{{\bf{1}}^ \top }\frac{{\bf{K}}}{{{{({{{p}}^{{u}}})}^{{2}}}}}{\bf{1}}}
\end{array}} \right]
\end{equation}

\hrulefill
\end{figure*}

The Hessian matrix of objective function ${\mathscr R}({\bf \Theta }, p^{\mathrm{u}})$ can be obtained from $\left[ {\begin{array}{*{20}{c}}
{\frac{{{\partial ^2}\mathscr R}}{{\partial {{\bf{\Theta}} ^2}}}}&{\frac{{{\partial ^2}\mathscr R}}{{\partial {\bf{\Theta}} \partial {p^{\mathrm{u}}}}}}\\
{{{\left( {\frac{{{\partial ^2}\mathscr R}}{{\partial {\bf{\Theta}} \partial {p^{\mathrm{u}}}}}} \right)}^ \top }}&{\frac{{{\partial ^2}\mathscr R}}{{\partial {{\left( {{p^{\mathrm{u}}}} \right)}^2}}}}
\end{array}} \right]$, as shown in~(\ref{Eq:H}). Similar to that in the proof of Proposition 4, we decompose the Hessian matrix into three matrix $\rm -D-E-F$, and the matrices are shown in~(\ref{Eq:Decompose2}). To prove that~(\ref{Eq:H}) is negative definite, we can simply prove that $\rm D$, $\rm E$ and $\rm F$ are all positive definite. Here we also use the matrix congruence theorem, to prove the positivity of its congruence matrix. Specifically, we use $P_4 = \left[ {\begin{array}{*{20}{c}} {\bf{I}}&{ - {{\left(2t\gamma {{({p^{\mathrm{u}}})}^2}{{\bf{K}}^2}\right)}^{ - 1}}2t\gamma {{\bf{K}}^2}\left[ {{\bf{a}} - 2{p^{\mathrm{u}}}\left( {{\bf{1}} - {\bf{\Theta }}} \right)} \right]}\\
{{0}}&1 \end{array}} \right]$ to obtain $\rm D'$ in~(\ref{Eq:D}). To prove that $\rm D'$ is positive definite, we have $ - 2\gamma t{\left[ {\frac{{\bf{a}}}{{{p^{\mathrm{u}}}}} - 2\left( {{\bf{1}} - {\bf{\Theta }}} \right)} \right]^ \top }{{\bf{K}}^2}\left[ {\frac{{\bf{a}}}{{{p^{\mathrm{u}}}}} - 2\left( {{\bf{1}} - {\bf{\Theta }}} \right)} \right] + 2t\gamma {({\bf{1}} - {\bf{\Theta }})^ \top }{{\bf{K}}^2}({\bf{1}} - {\bf{\Theta }}) > 0$. Moreover, we have $\left( {{\bf{1}} - {\bf{\Theta }}} \right) > \frac{{\bf{a}}}{{{p^{\mathrm{u}}}}} - 2\left( {{\bf{1}} - {\bf{\Theta }}} \right)$, which implies ${p^{\mathrm{u}}}\left( {{\bf{1}} - {\bf{\Theta }}} \right) > \frac{{\bf{a}}}{3}$. Then, we use $P_5 = \left[ {\begin{array}{*{20}{c}}
{\bf{I}}&{ - {{\left({{({p^{\mathrm{u}}})}^2}{\bf{K}}\right)}^{ - 1}}{\bf{K}}\left[ {{\bf{a}} - 4{p^{\mathrm{u}}}\left( {{\bf{1}} - {\bf{\Theta }}} \right)} \right]}\\
{{0}}&1 \end{array}} \right]$ and $P_6 = \left[ {\begin{array}{*{20}{c}}
{\bf{I}}&{{{\left({{({p^{\mathrm{u}}})}^2}{\bf{K}}\right)}^{ - 1}}\gamma s{\bf{K1}}}\\
{{0}}&1 \end{array}} \right]$ to obtain $\rm E'$ in~(\ref{Eq:E}) and $\rm F'$ in~(\ref{Eq:F}). Similarly, to prove that $\rm E'$ and $\rm F'$ are positive definite, we have to ensure ${{{\left( {{\bf{1}} - {\bf{\Theta }}} \right)}^ \top }{\bf{K}}\left( {{\bf{1}} - {\bf{\Theta }}} \right) - {\gamma ^2}{s^2}{{\bf{1}}^ \top }\frac{{\bf{K}}}{{{{({p^{\mathrm{u}}})}^2}}}{\bf{1}}}>0$ and $- {{[{\bf{a}} - 4{p^{\mathrm{u}}}\left( {{\bf{1}} - {\bf{\Theta }}} \right)]}^ \top }\frac{{\bf{K}}}{{{{({p^{\mathrm{u}}})}^2}}}\left[ {{\bf{a}} - 4{p^{\mathrm{u}}}\left( {{\bf{1}} - {\bf{\Theta }}} \right)} \right] + {{\left( {{\bf{1}} - {\bf{\Theta }}} \right)}^ \top }{\bf{K}}\left( {{\bf{1}} - {\bf{\Theta }}} \right)  > 0$, which corresponds to ${p^{\mathrm{u}}}\left( {{\bf{1}} - {\bf{\Theta }}} \right) > \gamma s{\bf{1}}$ and ${p^{\mathrm{u}}}\left( {{\bf{1}} - {\bf{\Theta }}} \right) > \frac{{\bf{a}}}{5}$, respectively. We can see the condition where the Hessian matrix of $\mathscr R$ is negative definite holds under Assumption~3. The proof is completed.
\end{proof}
However, it is impossible to derive the closed form solution for $\bf \Theta^*$ and $\{p^{\mathrm{u}}\}^*$, due to their complicated expression. In our performance evaluation, we can apply the low-complexity iterative algorithms based on the gradient assisted binary search algorithm to find the optimal sponsorship $\bf \Theta^*$ and optimal price $\{p^{\mathrm{u}}\}^*$, which are the optimal solutions in Stage~I. In all, the $\left\{{\bf{x^*}}, {\bf{\Theta^*}}, \{p^{\mathrm{u}}\}^* \right\}$ solved in Section~\ref{solution} is the Stackelberg equilibrium for this two-stage game.

\section{Simulation results}\label{Sec:Simulation}
In this section, we perform the simulations to evaluate the performance of the strategy adaptation of the SP, CP, and MUs in data sponsored market under the competitive and cooperative cases. We consider a group of $N$ MUs in a social network and assume that the parameters of MUs $a_i$ and $b_i$ follow the normal distribution $\mathcal{N}(\mu_a, 1)$ and $\mathcal{N}(\mu_b, 1)$. Likewise, the social tie $g_{ij}$ between any two MUs $i$ and $j$ follows a normal distribution $\mathcal{N}(\mu_g, 1)$. The default parameters are set as follows: $c=3$, $\gamma=2$, $s=t=5$, $\mu_a=\mu_b=30$, $\mu_g = 4$ and $N=100$.

We first investigate the impact from varying the number of MUs on the three entities of data sponsored market, i.e., MUs, CP and SP, as shown in Fig.~\ref{Fig:number}. As expected, the total data demand of MUs, the profit of CP and the revenue of SP increase as the number of MUs increases, in either competitive or cooperative case. The reason is that adding more MUs would enhance each MU's interactions with others, and potentially stimulate more data demand of new coming MUs. However, owing to the congestion effects, the marginal increase of the data demand decreases as the number of MUs increases. Meanwhile, the CP provides more sponsorship for the coming MUs. We also observe that the optimal price increases with the increase of number of MUs. This is because as the number of MUs increases, more MUs have higher intrinsic demands, so that increasing the price does not result in significant decrease in total demand. Thus, both the CP and SP extract more surplus from MUs and have higher payoff.

\begin{figure}[t]
\centering
\includegraphics[width=.43\textwidth]{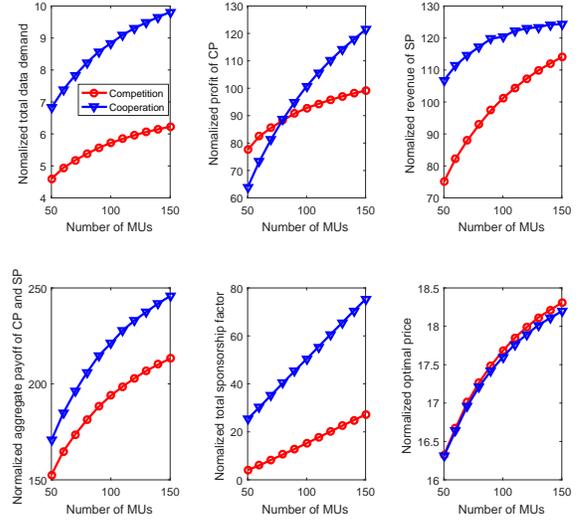}
\caption{The impact of the total number of MUs on three entities of data sponsored market.}\label{Fig:number}
\end{figure}

We then investigate the impact of network effects on these three entities, and the results are shown in Fig.~\ref{Fig:networkeffect}. In both competitive and cooperative cases, the total data demand of MUs, the profit of CP and the revenue of SP increase significantly with stronger network effects. Additionally, as the network effects become stronger, the demand of each MU is promoted due to stronger positive interdependency of each other. When the demand of MUs is high enough, consequently the CP is able to provide lower sponsorship to save money. We find that the optimal price increases as the network effects becomes stronger. The reason is that when the network effect is stronger, the additional benefits obtained due to network effects are greater. To take advantage of the underlying network effects, the SP sets a lower price to encourage more MUs, and each MU will be influenced by neighbours for higher data demand. Thus, the SP achieves higher revenue correspondingly.

\begin{figure}[t]
\centering
\includegraphics[width=.45\textwidth]{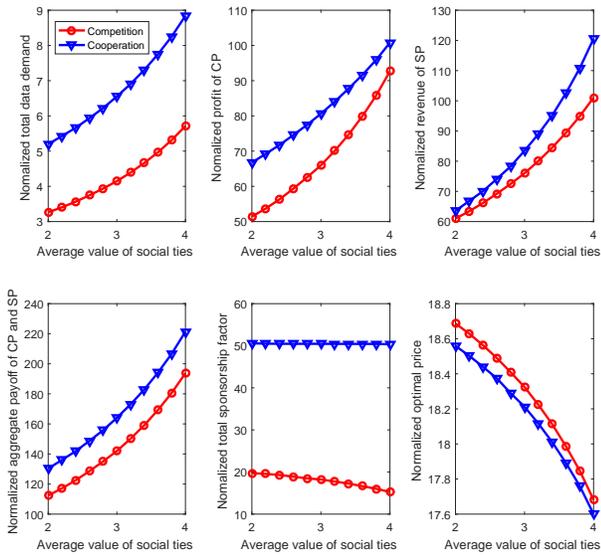}
\caption{The impact of social network effects on three entities of data sponsored market.}\label{Fig:networkeffect}
\end{figure}

\begin{figure}[t]
\centering
\includegraphics[width=.45\textwidth]{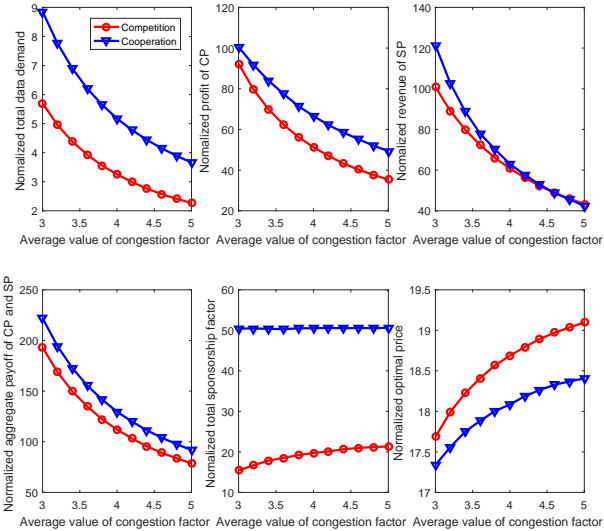}
\caption{The impact of congestion effects on three entities of data sponsored market.}\label{Fig:congestion}
\end{figure}

Next, we evaluate the impact of congestion on three entities, as illustrated in Fig.~\ref{Fig:congestion}. We observe that the total data demand of MUs decreases as the congestion increases. This is because the congestion has a negative impact on the data demand of MUs. Then, the SP does not need to set a lower price to attract more MUs, which may decrease the total data demand further. At the same time, the CP needs to provide more sponsorship with the increase of price to retain the original MUs at least, which may incur more costs. Accordingly, the profit of CP and the revenue of SP decrease as the congestion increases.

At last, we study the impact of competition and cooperation between CP and SP in Figs.~\ref{Fig:number}-\ref{Fig:congestion}. Recall that in cooperative case, the CP-SP coalition aims to maximize their aggregate payoff. In other words, the coalition is able to adopt the discriminatory pricing in this market. In this case, the coalition usually sets a lower price, and provides more and appropriate distribution of sponsorship to better encourage all the MUs. Therefore, the cooperation between the CP and SP helps to achieve higher aggregate payoff, as indicated in Figs.~\ref{Fig:number}-\ref{Fig:congestion}. However, in the competitive case, when SP sets a lower price, the CP provides less sponsorship to maximize its profit because the sponsorship is not necessary when the price is low enough. As a result, the aggregate payoff of CP and SP decreases, compared with the cooperative case. All the results in Figs.~\ref{Fig:number}-\ref{Fig:congestion} clearly show that the cooperation between CP and SP is the best choice for all of three entities in the data sponsored market.

\section{Conclusion}\label{Sec:Conclusion}
We have proposed a two-stage game to model the data sponsored market. In particular, we have focused on the interaction between a single service provider and a single content provider. Further, we have taken the impacts of network effects into account to reconsider the competition and the cooperation in the data sponsored market. Specifically, the network effects in the social domain and congestions in the network domain have been jointly considered for modeling the interactions among mobile users. Under this setting, we have characterized the scenario where the service provider and content provider compete separately, and the scenario where the service provider and content provider cooperate for a common goal. Finally, through extensive simulations, it has been verified that the cooperation is the best choice for three entities, i.e., the service provider, content provider and mobile users. For the future work, we will further consider the unfixed payment from the content provider to the service provider.
\bibliography{bibfile}

\end{document}